\theoremstyle{plain}
\newtheorem{theorem}{Theorem}[section]
\newtheorem{lemma}[theorem]{Lemma}
\newtheorem{corollary}[theorem]{Corollary}
\newtheorem{definition}[theorem]{Definition}
\newtheorem{conjecture}[theorem]{Conjecture}
\theoremstyle{definition}
\newtheorem*{remark}{Remark}
\newenvironment{customthm}[1]
  {\innercustomthm}
  {\endinnercustomthm}
\newenvironment{informalthm}
  {\medskip\noindent{\bf Theorem.}}
  {\medskip}
\newcommand{\NN}{\mathbb{N}}
\newcommand{\RR}{\mathbb{R}}
\newcommand{\EE}{\mathop{{}\mathbb{E}}}
\newcommand{\rarr}{\rightarrow}
\newcommand{\larr}{\leftarrow}
\newcommand{\eps}{\varepsilon}
\newcommand{\defeq}{\overset{{\rm def}}{=}}
\newcommand{\Adv}{\mathcal{A}}
\newcommand{\negl}{\mathsf{negl}}
\newcommand{\poly}{\mathsf{poly}}
\newcommand{\polylog}{\mathsf{polylog}}
\newcommand{\statIndist}{\overset{s}{\approx}}
\newcommand{\rprimesblue}{\textcolor{NavyBlue}{r'_1,\dots,r'_s}}
\newcommand{\rprimesred}{\textcolor{WildStrawberry}{r''_1,\dots,r''_s}}
\newcommand{\rhats}[2]{\hat{r}_{#1},\dots,\hat{r}_{#2}}
\newcommand{\rs}[2]{r_{#1},\dots,r_{#2}}
\newcommand{\cM}{{\cal M}}
\definecolor{commentColor}{rgb}{0,0.6,0.8}
\begin{document}

\title{Adaptively Secure Coin-Flipping, Revisited}

\author[1]{Shafi Goldwasser}
\affil[1]{MIT and the Weizmann Institute of Science}
\author[2]{Yael Tauman Kalai}
\affil[2]{Microsoft Research}
\author[3]{Sunoo Park}
\affil[3]{MIT}

\date{}
\maketitle

\begin{abstract}
The full-information model was introduced by Ben-Or and Linial in 1985 to study collective coin-flipping: the problem of generating a common bounded-bias bit in a network of $n$ players with $t=t(n)$ faults. They showed that the majority protocol, in which each player sends a random bit and the output is the majority of the players' bits, can tolerate $t(n)=O (\sqrt n)$ even in the presence of \emph{adaptive} corruptions, and they conjectured  that this is optimal for such adversaries. Lichtenstein, Linial, and Saks proved that the conjecture holds for protocols in which each player sends only a single bit. Their result has been the main progress on the conjecture during the last 30 years.

In this work we revisit this question and ask: what about protocols where players can send longer messages? Can increased communication allow for a larger fraction of corrupt players?

We introduce a model of \emph{strong adaptive} corruptions, in which an adversary sees all messages sent by honest parties in any given round and, based on the message content, decides whether to corrupt a party (and alter its message or sabotage its delivery) or not. This is in contrast to the (classical) adaptive adversary who can corrupt parties only based on past messages, and cannot alter messages already sent. 

We prove that any one-round coin-flipping protocol, \emph{regardless of message length}, can be secure against at most $\widetilde{O}(\sqrt n)$ strong adaptive corruptions. Thus, increased message length does not help in this setting.

We then shed light on the connection between adaptive and strongly adaptive adversaries, by proving that for any symmetric one-round coin-flipping protocol secure against $t$ adaptive corruptions, there is a symmetric one-round coin-flipping protocol secure against $t$ strongly adaptive corruptions. Going back to the standard adaptive model, we can now prove that any symmetric one-round protocol with arbitrarily long messages can tolerate at most $\widetilde{O}(\sqrt n)$ adaptive corruptions. 

At the heart of our results there is a novel use of the Minimax Theorem and a new technique for converting any one-round secure protocol  with arbitrarily long messages into a secure one where each player sends only  $\polylog(n)$ bits. This technique may be of independent interest.
\end{abstract}


\section{Introduction}

A collective coin-flipping protocol is one where a set of~$n$ players use private randomness to generate a common random bit~$b$.  Several protocol models have been studied in the literature.  In this work, we focus on the  model of {\em full information} \cite{BL85} where all parties communicate via a single broadcast channel.  

The challenge is that $t=t(n)$ of the parties may be corrupted and aim to bias the protocol outcome (i.e. the ``coin'') in a particular direction.  We focus on {\em Byzantine faults}, where once a party is corrupted, the adversary completely controls the party and can send any message on its behalf. Two types of Byzantine adversaries have been considered in the literature: {\em static} adversaries and {\em adaptive} adversaries.  A static adversary is one that chooses which~$t$ players to corrupt {\em before the protocol begins}.  An adaptive adversary is one who may choose which~$t$ players to corrupt adaptively, as the protocol progresses. 

Collective coin-flipping in the case of static adversaries is well understood (see section~\ref{related-work}).   
In this work, our focus is on the setting of adaptive adversaries, which has received considerably less attention. 
A collective coin-flipping protocol is said to be secure against~$t$ adaptive (resp. static) corruptions if for any adaptive adversary corrupting~$t$ parties, there is a constant $\eps>0$ such that the probability that the protocol outputs~$0$ (and the probability that the protocol outputs~$1$) is at least $\eps$, where the probability is taken over the randomness of the players and the adversary.  

The question we study is: {\em What is the maximum number of adaptive corruptions that a secure coin-flipping protocol can tolerate?} On the positive side, it has been shown by Ben-Or and Linial \cite{BL85} in 1985 that 
the majority protocol (where each party sends a random bit, and the output is equal to the majority of the bits sent), is resilient to~$\Theta(\sqrt n)$ adaptive corruptions.
Ben-Or and Linial conjectured that this is in fact optimal.
\begin{conjecture}[\cite{BL85}]\label{conj:coinFlipping}
Majority is the optimal coin-flipping protocol against adaptive adversaries.
In particular, any coin-flipping protocol is resilient to at most $O(\sqrt n)$ adaptive corruptions.
\end{conjecture}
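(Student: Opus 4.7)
The plan is to break the problem along two axes---message length per player per round, and number of rounds---and attack them in sequence. The base case of single-bit, multi-round protocols was already handled by Lichtenstein, Linial and Saks, so the remaining difficulty lies in protocols with longer messages and/or more sophisticated round structure. My strategy would first nail down the one-round case with arbitrarily long messages, and then reduce the multi-round case to that.

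For the one-round case, the natural idea is a message-compression argument. If each player can send an arbitrarily long string, it is not a priori clear that corruption power behaves like in the single-bit case, since a player's message can carry a great deal of ``sensitive'' information whose modification biases the output dramatically. I would attempt to show that any one-round protocol is, for the purposes of adversarial bias, equivalent to one in which each player sends only $\polylog(n)$ bits. The Minimax Theorem looks like the right tool here: view the protocol as a two-player zero-sum game between the designer (who picks a distribution over messages per player) and the adversary (who picks a distribution over sets of players to corrupt and what to substitute), and argue that an approximately optimal adversary strategy can be supported on a small number of ``canonical'' attacks, which in turn lets one quantize the message space. Once messages are polylogarithmic, a combinatorial/pivotal-player argument in the spirit of Lichtenstein--Linial--Saks should locate $\widetilde{O}(\sqrt{n})$ players whose corruption biases the output by a constant, yielding the desired bound for one round.

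For the multi-round reduction, I would try a round-by-round ``budgeted'' adversary. If each round can be biased by $O(\sqrt{n/r})$ corruptions for a constant-bias gain (where $r$ is the number of rounds), a telescoping/martingale argument against the output's conditional expectation should deliver constant total bias at a cumulative cost of $O(\sqrt{r n})$ corruptions, which is $\widetilde{O}(\sqrt n)$ whenever $r = \polylog(n)$. Handling super-polylogarithmic round counts would require a concentration-of-influence argument: one would show that the protocol's output, viewed as a martingale $X_0, X_1, \ldots, X_r$, must have a few rounds where $|X_i - X_{i-1}|$ is large, and the adversary can then focus its budget on those rounds.

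The main obstacle, I expect, is the coupling between rounds and long messages. In a truly multi-round protocol, a player's later messages can depend on earlier transcripts, so the protocol might ``launder'' the adversary's influence across rounds in ways one-round analysis does not capture. The \emph{strong adaptive} adversary introduced in the paper is a promising bridge, since it reduces the analysis to worst-case one-round attacks; but the equivalence to ordinary adaptive adversaries is currently established only for \emph{symmetric} one-round protocols. Extending that equivalence to multi-round or asymmetric protocols---or, alternatively, finding a direct attack that bypasses the strong-adaptive intermediate entirely---looks like where genuinely new ideas will be needed, and is the step I would expect to dominate the work.
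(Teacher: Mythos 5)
The statement you were handed is an open conjecture, not a theorem: the paper records it as the 30-year-old Ben-Or--Linial conjecture and proves only partial cases. There is no ``paper's proof'' of the full conjecture to compare against. What the paper actually establishes is (i) an $\widetilde{O}(\sqrt n)$ bound for \emph{one-round} protocols against \emph{strong adaptive} adversaries (Theorem~\ref{thm:main}), obtained by first compressing messages to $\polylog(n)$ bits via a random-matrix restriction argument using Chernoff bounds (Lemma~\ref{lem:manyBitsMatrix}), then splitting players into groups to reach single-bit messages (Lemma~\ref{lem:multiToOneBit_strong}), and finally invoking \cite{LLS89}; and (ii) a transfer theorem showing that symmetric one-round adaptive security implies strong adaptive security up to a constant factor in the corruption budget (Theorem~\ref{thm:minmax}). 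Only by combining (i) and (ii) does the paper obtain the $\widetilde{O}(\sqrt n)$ bound for ordinary adaptive adversaries, and then only for \emph{symmetric one-round} protocols.

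Your one-round sketch misattributes the role of the Minimax Theorem. You propose using minimax to argue that near-optimal adversary strategies have small support so the message space can be quantized, but the paper's message compression does not work this way: it picks a uniformly random $\ell'\times n$ matrix $M$ of full-length messages, has each player broadcast only a row index, and shows by Chernoff bounds over the random choice of $M$ that the robust sets of the restricted protocol remain large. Minimax appears elsewhere, namely in the construction of $\Pi'_n(r_1,\dots,r_n)=\min_{r'}\max_{r''}\Pi_{n+2s}(\vec r,r',r'')$ and the verification (via the equality of max-min and min-max) that both outcome sets $R_0$ and $R_1$ occur with constant probability under strong adaptive attack. Your proposed use of minimax to bound the support of an adversary's mixed strategy would, at best, give a support size polynomial in the number of the \emph{designer's} pure strategies, which is exponential in $n$ here; it does not obviously yield $\polylog(n)$-bit messages.

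The multi-round, budgeted-adversary portion of your plan is where the real gap lies, and it is a gap in the literature as well, not just in your sketch. A round-by-round bias of $O(\sqrt{n/r})$ is not established; the adversary's round-$i$ budget does not transfer to later rounds after corruptions are spent; and a martingale with a few large increments $|X_i-X_{i-1}|$ does not automatically admit a low-corruption attack on those rounds, because a large increment can be spread across many players' simultaneous small contributions. The paper does not attempt multi-round or asymmetric one-round cases and explicitly lists them as open. Your structural decomposition (one-round first, then a multi-round reduction) is reasonable, but the conjecture remains open precisely because step two has no known execution.
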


Shortly thereafter,
Lichtenstein, Linial, and Saks \cite{LLS89} proved the conjecture for a restricted class of protocols: namely, those in which each player sends only a single bit.
Their result has been the main progress on the conjecture of \cite{BL85} during the last 30 years.

\subsection{Our contribution}

We first define a new adversarial model of {\em strong adaptive} corruptions.  
Informally, an adversary is strongly adaptive if he can corrupt players depending on the content of their messages. More precisely, in each round,
he can see all the messages that honest players ``would'' send, and then decide which of them to corrupt. 
This is in contrast to a (traditionally defined) adaptive adversary who can, at any point in the protocol, corrupt any player who has not yet spoken based on the history of communication, but cannot alter the message of a player who has already spoken.
Thus, strong adaptive adversaries are more powerful than adaptive adversaries.

We believe that the notion of \emph{strong adaptive security} gives rise to a natural and interesting new adversarial model
in which to study multi-party protocols in general.
Indeed, it is a realistic concern in many settings that malicious parties may decide to stop or alter messages sent by honest players \emph{depending on message content},
and it is a shortcoming that existing adversarial models fail to take such behavior into account.

We consider our strong adaptive adversarial notion to be closely tied to the notion of a \emph{rushing} adversary in the setting of static corruptions.
A rushing static adversary can see the messages that the honest players
send in each round, before deciding the messages that the corrupted players will send in the same round.
The intuitive idea of a rushing adversary is that the adversary \emph{sees all possible information in each round, before making his move}.
We remark that a notion of ``rushing adaptive adversary'' has been previously proposed in the literature, but such an adversary is weaker
than our strong adaptive adversary\footnote{In particular, the ``rushing adaptive adversary'' from the literature can decide the order in which players
send messages in a round, and can decide to corrupt a player who has not yet sent a message within a round. However, unlike our strong adaptive adversary,
this adversary cannot decide to corrupt a player based on the content of the message which the player \emph{would send if uncorrupted}.}.
We argue that our strong adaptive adversary better captures the idea that the adversary \emph{sees all possibly relevant information in each round, before making his move}, since in the adaptive setting, the adversary's strategy must decide not only what messages to send, but also \emph{which players to corrupt}.

Our main result is that the conjecture of \cite{BL85} holds (up to polylogarithmic factors)
for any one-round coin-flipping protocol in the presence of \emph{strong adaptive} corruptions.

\begin{informalthm}
Any secure one-round coin-flipping protocol $\Pi$ can tolerate at most $t=\widetilde{O}(\sqrt n)$ strong adaptive corruptions.
\end{informalthm}

This is shown by a generic reduction of communication in the protocol: first, we prove that any strongly adaptively secure protocol $\Pi$ can be converted to one
where players send messages of no more than polylogarithmic length, while preserving the number of corruptions that can be tolerated.
Then, we show that any protocol with messages of polylogarithmic length can be converted to one where each player sends only a single bit,
at the cost of a polylogarithmic factor in the number of corruptions.
Finally, we reach the \emph{single-bit} setting in which the bound of Lichtenstein et al. \cite{LLS89} can be applied to obtain the theorem.
We believe that our technique of converting any protocol into one with short messages is of independent interest and will find other applications.

Furthermore, we prove that strongly adaptively secure protocols are a more general class of protocols than symmetric adaptively secure protocols.
A symmetric protocol $\Pi$ is a one that is oblivious to the order of its inputs: that is, where
for any permutation $\pi:[n]\rightarrow[n]$ of the players, it holds that the protocol outcome $\Pi(r_1,\ldots,r_n)=\Pi(r_{\pi(1)},\ldots,r_{\pi(n)})$ is the same.

\begin{informalthm}
For any symmetric one-round coin-flipping protocol $\Pi$ secure against $t=t(n)$ adaptive corruptions,
there is a symmetric one-round coin-flipping protocol $\Pi'$ secure against $\Omega(t)$ strong adaptive corruptions.
\end{informalthm}

Curiously, this proof makes a novel use of the Minimax Theorem \cite{vN44,Nash} from game theory, in order to take any symmetric, adaptively secure protocol and convert it to a new protocol which is strongly adaptively secure. This technique views the protocol as a zero-sum game between two players $\Adv_0$ and $\Adv_1$, where $\Adv_0$
wins if the protocol outcome is 0 and $\Adv_1$ wins if the outcome is 1. 
We analyze the ``minimax strategy'' in which the players try to minimize their maximum loss, in order to deduce the strong adaptive security of the new protocol.
Whereas some prior works have made use of game theory in the analysis of (two-party) protocols,
this is the first use of these game-theoretic concepts in the \emph{construction} of distributed multiparty protocols.

Finally, using the above results as stepping stones, we return to the classical conjecture of \cite{BL85},  in the model of adaptive adversaries, and show that the conjecture holds (up to polylogarithmic factors)
for any \emph{symmetric} one-round protocol with arbitrarily long messages.

\begin{informalthm}
Any secure symmetric one-round coin-flipping protocol $\Pi$ can tolerate at most $t=\widetilde{O}(\sqrt n)$ adaptive corruptions.
\end{informalthm}

\subsection{Related work}\label{related-work}

The full-information model (also known as the \emph{perfect information model}) was introduced by Ben-Or and Linial \cite{BL85} 
to study the problem of collective coin-flipping when no secret communication is possible between honest players.

\paragraph{In the static setting.}
Protocols for collective coin-flipping in the presence of static corruptions have been constructed in a series of works 
that variously focus on improving the fault-tolerance, round complexity, and/or bias of the output bit.
Feige \cite{Fei99} gave a protocol that is $(\delta^{1.65}/2)$-secure\footnote{A
coin-flipping protocol is $\eps$-secure against~$t$ static corruptions if for any static adversary that corrupts up to~$t$ parties, 
the probability that the protocol outputs~$0$ is at least $\eps$.}
in the presence of $t=(1+\delta)\cdot n/2$ static corruptions for any constant $0<\delta<1$.
Russell, Saks, and Zuckerman \cite{RSZ02} then showed that any protocol that is secure in the presence of linearly many corruptions must
either have at least $(1/2-o(1))\cdot\log^*(n)$ rounds, or communicate many bits per round.

Interestingly, nearly all proposed \emph{multi-round} protocols for collective coin-flipping
first run a \emph{leader election} protocol in which one of the $n$ players is selected as a ``leader'', who then outputs a bit that is taken as the protocol outcome.
We remark that this approach is inherently unsuitable for adaptive adversaries, which can always corrupt the leader after he is elected, 
and thereby surely control the protocol outcome.

\paragraph{In the adaptive setting.}
The study of coin-flipping protocols has been predominantly in the static setting.
The problem of adaptively secure coin-flipping was introduced by Ben-Or and Linial \cite{BL85}
and further examined by Lichtenstein, Linial, and Saks \cite{LLS89} as described in the previous section.
In addition, Dodis \cite{Dod00} proved that through ``black-box'' reductions from \emph{non-adaptive} coin-flipping, it is not possible tolerate
significantly more corruptions than the majority protocol. The definition of ``black-box'' used in \cite{Dod00}
is rather restricted: it only considers sequential composition of non-adaptive coin-flipping protocols, followed by a (non-interactive)
function computation on the coin-flips thus obtained.

\paragraph{In the pairwise-channels setting.}
An adversarial model bearing some resemblance to our strong adaptive adversary model was introduced and analyzed by Hirt and Zikas \cite{HZ10}
in the \emph{pairwise communication channels} model, rather than the full-information model. In their model, the adversary can corrupt a party $P$ based on
some of the messages that $P$ sends within a round, then the adversary controls the rest of $P$'s messages in that round (and for future rounds).
Unlike in our strong adaptive model, the adversary of \cite{HZ10} cannot ``see inside all players' heads'' 
and overwrite arbitrary honest messages based on their content before they are sent.

Interestingly, a separation has been shown between standard adaptive adversaries and the stronger adversaries of Hirt and Zikas:
\cite{HZ10} shows that broadcast is impossible to achieve for $t>n/2$ corruptions in their stronger adversarial model,
whereas Garay et al. \cite{GKKZ11} showed that broadcast is achievable for any $t<n$ corruptions in the standard adaptive adversarial model.

\paragraph{In the computational setting.}
The problem of generating a shared random bit has also been studied in the setting where players are computationally bounded,
and in different communication network models. Blum \cite{Blu81} introduced the coin-flipping problem in the two-player computational setting; and
Goldreich, Micali, and Wigderson \cite{GMW87} subsequently showed that it is possible to efficiently generate a shared bit with negligible bias, in the presence of static adversaries.

Another line of work shows that the existence of any coin-flipping protocol for computationally bounded players which achieves a sufficiently small bias
implies the existence of one-way functions. The latest result in this line of work, due to Berman, Haitner, and Tentes \cite{BHT14}, proves
that if there exists a two-player coin-flipping protocol that achieves any constant bias, then one-way functions exist.

\section{Preliminaries}

We consider coin-flipping protocols in the \emph{full-information model} (also known as the \emph{perfect information model}), 
where $n$ computationally unbounded players communicate via a single broadcast channel. 
The network is synchronized between rounds, but is asynchronized within each round (that is, there is no guarantee on message ordering within a round,
and an adversary can see the messages of all honest players in a round before deciding his own messages).

In this work, we focus on \emph{one-round} protocols, and we consider protocols that terminate (and produce an output) with probability 1.
In particular, we focus on \emph{coin-flipping} protocols, which are defined as follows.

\begin{definition}[Coin-flipping protocol]\label{def:coinFlipping}
A \emph{coin-flipping protocol} $\Pi=\{\Pi_n\}_{n\in\NN}$ is a family of protocols 
where each $\Pi_n$ is a $n$-player protocol which outputs a bit in $\{0,1\}$.
\end{definition}

\paragraph{Notation.}
We write $\statIndist$ for statistical indistinguishability of distributions.
We denote by $\Pr^\Pi(b)$ the probability that an honest execution of $\Pi$ will lead to the outcome $b\in\{0,1\}$.
We denote by $\Pr^{\Pi,\Adv}(b)$ the probability that an execution of $\Pi$ in the presence of an
adversary $\Adv$ will lead to the outcome $b\in\{0,1\}$. 
The probability is over the random coins of the honest players and the adversary.

For one-round protocols, we write $\Pi_n(r_1,\dots,r_n)$
to denote the outcome of the protocol $\Pi_n$ when each player $i$ sends message $r_i$. 
(The vector $(r_1,\dots,r_n)$ is a protocol \emph{transcript}.)

\subsection{Properties of protocols}

\begin{definition}[Symmetric protocol]
A protocol $\Pi$ is \emph{symmetric} if the outcome of a protocol execution is the same 
no matter how the messages within each round are permuted. In particular, a one-round protocol $\Pi$ is symmetric 
if for all $n\in\NN$ and any permutation $\pi\in[n]\rarr[n]$,
$$\Pi_n(r_{1},\dots,r_{n})=\Pi_n(r_{\pi(1)},\dots,r_{\pi(n)}).$$
\end{definition}

We remark, for completeness, that in the multi-round case,
the outcome of a symmetric protocol should be unchanged even if different permutations are applied in different rounds.

\begin{definition}[Single-bit/multi-bit protocol]
A protocol is \emph{single-bit} if each player sends at most one bit over the course of the protocol execution.
Similarly, a protocol is \emph{$m$-bit} if each player sends at most $m$ bits over the course of the protocol execution.
More generally, a protocol which is not single-bit is called \emph{multi-bit}.
\end{definition}

\begin{definition}[Public-coin protocol]
A protocol is \emph{public-coin} if each honest player broadcasts all of the randomness he generates 
(i.e. his ``local coin-flips''), and does not send any other messages.
\end{definition}


\subsection{Adversarial models in the literature}\label{sec:advModels}

The type of adversary that has been by far the most extensively studied in the coin-flipping literature
is the static adversary, which chooses a subset of players to corrupt \emph{before} the protocol execution begins,
and controls the behavior of the corrupt players arbitrarily throughout the protocol execution.

A stronger type of adversary is the \emph{adaptive} adversary, which may choose players to corrupt at any point during protocol execution,
and controls the behavior of the corrupt players arbitrarily from the moment of corruption until protocol termination.

\begin{definition}[Adaptive adversary]
Within each round, the adversary chooses players one-by-one to send their messages; 
and he can perform corruptions at any point during this process.
\end{definition}



\subsection{Security of coin-flipping protocols}

The security of a coin-flipping protocol is usually measured by the extent to which an adversary can, by corrupting a subset of parties,
bias the protocol outcome towards his desired bit.

\begin{definition}[$\eps$-security]
A coin-flipping protocol $\Pi$ is \emph{$\eps$-secure} against $t=t(n)$ adaptive (or static or strong adaptive) 
corruptions if for all $n\in\NN$, it holds that 
for any adaptive (resp. static or strong adaptive) adversary $\Adv$ that corrupts at most $t=t(n)$ players,
$$\min\left({\Pr}^{\Pi_n,\Adv}(0),{\Pr}^{\Pi_n,\Adv}(1)\right)\geq\eps.$$
\end{definition}

We remark that this definition of $\eps$-security is sometimes referred to as \emph{$\eps$-control} or \emph{$\eps$-resilience} in other works.
We next define a \emph{secure} protocol to be one
with ``minimal'' security properties (that is, one where the adversary does not almost always get the outcome he wants).

\begin{definition}[Security]
A coin-flipping protocol is \emph{secure} against $t=t(n)$ corruptions 
if it is $\eps$-secure against $t$ corruptions for some constant $0<\eps<1$. 
\end{definition}

In this work, we investigate the maximum proportion of adaptive corruptions that can be tolerated by \emph{any} secure protocol.

\section{Our results}

\subsection{Strongly adaptive adversaries}

In this work, we propose a new, stronger adversarial model than those that have been studied thus far (see section \ref{sec:advModels}), 
in which the adversary can see all honest players' messages within any given round, 
and \emph{subsequently} decide which players to corrupt. That is, he can see all the messages that the honest players ``would have sent'' in a round,
and then selectively intercept and alter these messages.

\begin{definition}[Strong adaptive adversary]
Within each round, the adversary sees all the messages that honest players would have sent, 
then gets to choose which (if any) of those messages to corrupt (i.e. replace with messages of his choice).
\end{definition}

This notion is an essential tool underlying the proof techniques in our work.
Moreover, we believe that the notion of \emph{strong adaptive security} gives rise to a natural and interesting new adversarial model
in which to study multi-party protocols, which is of independent interest beyond the scope of this work.

\subsection{Corruption tolerance in secure coin-flipping protocols}

Our main contributions consist of the following three results.
These can be viewed as partial progress towards proving the 30-year-old conjecture of \cite{BL85}.

\begin{theorem}\label{thm:main}
Any one-round coin-flipping protocol $\Pi$ can be secure against at most $t=\widetilde{O}(\sqrt n)$ strong adaptive corruptions.
\end{theorem}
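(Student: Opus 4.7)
My plan is to follow the three-step reduction strategy already sketched in the introduction: (i) convert any strongly adaptively secure protocol $\Pi$ into one whose messages have length $\polylog(n)$, without losing any corruptions; (ii) convert any $\polylog(n)$-bit strongly adaptively secure protocol into a single-bit strongly adaptively secure protocol, losing at most a $\polylog(n)$ factor in the corruption budget; (iii) apply the single-bit bound of \cite{LLS89} to the resulting protocol. Because the theorem is an impossibility statement, each reduction must transform a secure protocol into another secure (and simpler) protocol while tracking how the corruption budget changes.

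Step (iii) is immediate: \cite{LLS89} gives an $O(\sqrt n)$ bound against adaptive adversaries for single-bit protocols, and a strong adaptive adversary is strictly more powerful than an adaptive one, so the same $O(\sqrt n)$ bound applies in the strong adaptive model. Step (ii) I would realize by replacing each player in the $\polylog(n)$-bit protocol $\Pi'$ with a cluster of $\polylog(n)$ ``virtual'' single-bit players that together emit the original player's message; corrupting a whole cluster in the single-bit protocol faithfully simulates a strong adaptive corruption of the corresponding player in $\Pi'$. Hence, if the single-bit protocol tolerates $t''$ strong adaptive corruptions, $\Pi'$ tolerates at least $\lfloor t''/\polylog(n)\rfloor$ of them, which costs exactly the $\polylog(n)$ factor hidden by the $\widetilde{O}$.

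Step (i) is the heart of the proof and where I expect all the difficulty to live. The naive obstacle is that the outcome function of $\Pi$ may depend on long messages in arbitrarily complex ways, so blunt truncation or hashing could destroy security. The crucial leverage of the strong adaptive model is that the adversary commits to corruptions only \emph{after} seeing the entire honest transcript: thus his optimal behavior depends only on the ``outcome-relevant'' features of each message, not on the message itself. My plan is to show that for each player $i$ there exists a $\polylog(n)$-bit sketch of her honest message that captures every feature a strong adaptive adversary could exploit, and to define $\Pi'$ by having each player sample this sketch directly, with the new outcome function evaluating $\Pi$ on a canonical preimage. A strong adaptive adversary against $\Pi'$ can then be lifted to an equally effective strong adaptive adversary against $\Pi$, preserving the corruption budget.

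Combining the three steps, if $\Pi$ is secure against $t$ strong adaptive corruptions, then (i) gives a $\polylog(n)$-bit protocol secure against $t$ corruptions, (ii) gives a single-bit protocol secure against $t/\polylog(n)$ corruptions, and (iii) forces $t/\polylog(n) = O(\sqrt n)$, i.e., $t = \widetilde{O}(\sqrt n)$. The main obstacle I foresee is making the compression of Step (i) work uniformly for \emph{all} outcome functions, including those whose dependence on messages is ``spread out'' so that no single symbol of the message is individually informative; arguing that the strong adaptive adversary nevertheless sees no more in the original protocol than in the compressed one is the delicate quantitative part of the argument.
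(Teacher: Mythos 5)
Your three-step skeleton is exactly the paper's: reduce to $\polylog(n)$-bit messages, then to single-bit messages (paying a $\polylog$ factor), then invoke the Lichtenstein--Linial--Saks bound. Steps (ii) and (iii) match the paper's Lemma~\ref{lem:multiToOneBit_strong} and Theorem~\ref{thm:lls} essentially verbatim (the ``cluster of $\ell$ virtual single-bit players'' construction is precisely what the paper does, and the observation that a strong adaptive adversary simulates an adaptive one is used the same way). So the gap is entirely in Step (i), which you yourself flag as ``the heart of the proof'' and leave as an open problem. That is the part the theorem actually needs, so as written this is not a proof.

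Moreover, the plan you sketch for Step (i) --- a deterministic per-player ``$\polylog(n)$-bit sketch'' of each long message that ``captures every feature a strong adaptive adversary could exploit'' --- is not how the paper closes the gap, and I think it is a dead end. Whether a feature of player $i$'s message is exploitable depends jointly on the other players' messages, so there is no reason a short per-message summary preserving security should exist; and even if it did, you would still have to argue that the induced distribution on sketches and the new outcome function behave correctly, which is exactly the difficulty you defer. The paper's Lemma~\ref{lem:manyBitsMatrix} instead performs a \emph{distributional} compression by a probabilistic existence argument: fix a uniformly random $\ell'\times n$ matrix $M$ of candidate messages with $\ell' = 2^{O(\log^{1+\delta} n)}$, have each player $P_i$ broadcast only a random row index $a_i\in[\ell']$, and define the new outcome as $\Pi_n(M_{(a_1,1)},\dots,M_{(a_n,n)})$. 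Because $\ell'$ is super-polynomial, $n$ random indices are distinct with overwhelming probability, so the induced message-tuple is statistically close to a fresh honest transcript of $\Pi$. Using the characterization of strong adaptive security via robust sets (Lemma~\ref{lem:robustSets}) together with Chernoff bounds over the random choice of $M$, the paper shows that some fixed $M$ yields a secure $\ell$-bit protocol. No individual message is ever summarized; instead the support of each player's message distribution is replaced by a random quasi-polynomial-size subset. You would need to discover (or re-derive) this lemma, or something equivalent, before your argument goes through.
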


\begin{theorem}\label{thm:minmax}
For any symmetric one-round coin-flipping protocol $\Pi$ secure against $t=t(n)$ adaptive corruptions,
there is a symmetric one-round coin-flipping protocol $\Pi'$ secure against $\Omega(t)$ strong adaptive corruptions.
\end{theorem}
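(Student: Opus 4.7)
The plan is to use the Minimax Theorem on a zero-sum game between two strong adaptive adversaries attacking $\Pi$, and to embed the resulting equilibrium strategies directly into a new protocol $\Pi'$, so that any external strong adaptive attack on $\Pi'$ is forced into equilibrium. This approach uses game theory in the construction (not merely the analysis) of $\Pi'$, as hinted at by the paper's discussion.

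Define the zero-sum game $G$ as follows. Both players are strong adaptive adversaries against $\Pi$, each permitted up to $t$ corruptions; $\Adv_0$ wishes to force outcome $0$ and $\Adv_1$ wishes to force outcome $1$. In the game, both players simultaneously observe the $n$ honest messages that would be sent in the single round of $\Pi$ and each chooses a subset of players to corrupt, with a fixed tie-breaking rule when both select the same player. The protocol $\Pi$ is then run with both sets of substitutions, and the payoff to $\Adv_1$ is $\Pr[\text{outcome}=1]$. After a suitable finite discretization of the strategy spaces (justified by the symmetry of $\Pi$), the Minimax Theorem yields a value $v^*\in[0,1]$ and mixed equilibrium strategies $\sigma_0^*,\sigma_1^*$ with the guarantee that $\sigma_0^*$ forces outcome $0$ with probability at least $1-v^*$ against any $\Adv_1$, and symmetrically $\sigma_1^*$ forces outcome $1$ with probability at least $v^*$ against any $\Adv_0$.

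I would then construct $\Pi'$ by building these equilibrium strategies into $\Pi$: each honest player independently samples an honest $\Pi$-message, a uniform role bit $b\in\{0,1\}$, and the simulated action prescribed by $\sigma_b^*$. The output function of $\Pi'$ applies the simulated overwrites symmetrically to the honest message vector, then invokes $\Pi$'s output function. Because the role assignment is symmetric across players and the overwrite rule treats players identically, $\Pi'$ remains symmetric and one-round. For security, I would argue that any strong adaptive adversary $\Adv$ corrupting $\Omega(t)$ players of $\Pi'$, together with the embedded $\sigma_b^*$ simulators, induces a joint play in $G$; by the minimax inequality combined with the assumed adaptive security of $\Pi$ at $t$ corruptions, neither outcome can be forced with probability approaching $1$, yielding the required $\Omega(1)$-security.

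The main obstacle will be the bookkeeping in this reduction: since $\Adv$ sees the entire message vector in $\Pi'$ (including the role assignments and simulated actions), $\Adv$ may try to target precisely the players who are simulating the minimax defense. I must therefore argue that $\Omega(t)$ corruptions are too few to simultaneously neutralize the embedded defenses of both $\sigma_0^*$ and $\sigma_1^*$, using the independence of the role bits so that an attack concentrated on one role leaves roughly half of the other role's simulators intact. A secondary technical challenge will be verifying the Minimax Theorem on the a priori exponential strategy spaces of strong adaptive adversaries; I expect this to reduce to a finite game after quotienting by the permutation symmetries of $\Pi$ and restricting replacement messages to the (finite) support of the honest message distribution.
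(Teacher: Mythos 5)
Your approach diverges from the paper's in a way that leaves several real gaps. The paper does \emph{not} set up a game between two strong adaptive adversaries attacking $\Pi$ itself, nor does it embed equilibrium strategies into the players' behavior via role bits. Instead, it pads the protocol to $n+2s$ message slots (with $s=t/2$) and defines
$$\Pi'_n(r_1,\dots,r_n)=\min_{r'_1,\dots,r'_s}\max_{r''_1,\dots,r''_s}\Pi_{n+2s}\bigl(r_1,\dots,r_n,r'_1,\dots,r'_s,r''_1,\dots,r''_s\bigr),$$
so the min/max range over $2s$ \emph{virtual} slots that no real player occupies and no external adversary can touch. The \emph{adaptive} security of $\Pi_{n+2s}$ against $2s$ corruptions (all placed in the last $2s$ slots, committed to before seeing the first $n$ messages) then guarantees that with constant probability the honest $n$-vector already forces a fixed outcome regardless of the virtual slots. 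From there, symmetry of $\Pi$ lets the ``max'' helper undo any $s$ real corruptions by echoing the overwritten honest messages. The Minimax Theorem enters only once, and in a much lighter role than in your plan: to swap $\min\max$ into $\max\min$ so the Case-2 argument mirrors Case 1.

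Your version is not just a different route; there are obstacles it does not surmount. First, the game you define pits two strong adaptive adversaries against $\Pi$ itself, but the hypothesis only gives \emph{adaptive} security of $\Pi$, so nothing bounds the game value $v^*$ away from $0$ or $1$; if a single strong adaptive adversary can already force an outcome of $\Pi$ with overwhelming probability (which is exactly what we are trying to rule out for $\Pi'$, not what we know about $\Pi$), then $v^*$ is degenerate and the equilibrium strategies give no leverage. Second, the ``each player independently samples the action prescribed by $\sigma_b^*$'' step is ill-defined: a strong adaptive strategy chooses corruptions as a function of the \emph{entire} honest message vector, which an individual player cannot see; pushing the simulation into the output function instead collapses the role bits into extra message content that an external adversary can read and target, and your proposed fix (that $\Omega(t)$ corruptions cannot neutralize both roles) has no argument behind it since the roles do not correspond to anything the adversary is prevented from touching. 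Third, your construction overwrites the real players' own messages rather than introducing virtual slots, so you lose exactly the separation the paper exploits: in the paper, the ``max'' responder can restore any $s$ corrupted real messages because it lives in message slots the adversary cannot reach. The key idea you are missing is the padding to $n+2s$ virtual slots together with the observation that adaptive security of the \emph{larger} protocol controls the behavior of the virtual min/max; without that, neither the value of your game nor the security of your $\Pi'$ is anchored to anything you are given.
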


\begin{corollary}
Any symmetric one-round coin-flipping protocol $\Pi$ can be secure against at most $t=\widetilde{O}(\sqrt n)$ adaptive corruptions.
\end{corollary}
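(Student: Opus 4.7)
The plan is to obtain this corollary as a direct consequence of Theorems \ref{thm:main} and \ref{thm:minmax}, by composing the two reductions. Suppose, for contradiction (or simply for a direct bound), that $\Pi$ is a symmetric one-round coin-flipping protocol that is secure against $t=t(n)$ adaptive corruptions. I would first invoke Theorem \ref{thm:minmax} to convert $\Pi$ into a symmetric one-round coin-flipping protocol $\Pi'$ that is secure against $\Omega(t)$ \emph{strong adaptive} corruptions. This is the nontrivial step, but it is handed to us as a black box by Theorem \ref{thm:minmax} (this is where the Minimax-based construction in the paper does the real work).

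Next, I would apply Theorem \ref{thm:main} to the protocol $\Pi'$. Since $\Pi'$ is a one-round coin-flipping protocol secure against strong adaptive adversaries, the theorem bounds its tolerance by $\widetilde{O}(\sqrt n)$. Therefore $\Omega(t) \le \widetilde{O}(\sqrt n)$, which immediately yields $t = \widetilde{O}(\sqrt n)$, as desired. Since the $\widetilde{O}$ already absorbs polylogarithmic factors, swallowing the constant hidden in the $\Omega$ and pushing it through is cosmetic.

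There is essentially no additional obstacle at this stage: the hard work has been done in the two invoked theorems (the strong-adaptive upper bound, which relies on the message-shortening technique plus the \cite{LLS89} single-bit lower bound, and the adaptive-to-strong-adaptive reduction via the Minimax Theorem). The only thing to be slightly careful about is that the reduction of Theorem \ref{thm:minmax} preserves the property of being a one-round symmetric protocol (so that Theorem \ref{thm:main} is applicable to the output $\Pi'$), and preserves $\eps$-security up to constants (so that the notion of ``secure'' in the sense of Definition~\ref{def:coinFlipping} carries over). Both are properties that Theorem \ref{thm:minmax} explicitly asserts, so the corollary follows.
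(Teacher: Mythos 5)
Your proposal is correct and matches the paper's own argument exactly: apply Theorem \ref{thm:minmax} to convert the symmetric adaptively secure protocol into a (symmetric, one-round) strongly adaptively secure one with only a constant-factor loss, then invoke Theorem \ref{thm:main} to conclude the $\widetilde{O}(\sqrt n)$ bound. Nothing is missing.
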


In the next sections, we proceed to give detailed proofs of the theorems.

\subsection{Proof of Theorem \ref{thm:main}}

We begin by recalling the result of Lichtenstein et al. \cite{LLS89} which proves that the maximum number of adaptive corruptions for any secure \emph{single-bit} 
coin-flipping protocol is $O(\sqrt n)$. Note that the \emph{majority protocol} is the one-round protocol in which each player broadcasts a random bit,
and the majority of broadcasted bits is taken to be the protocol outcome.

\begin{theorem}[\cite{LLS89}]\label{thm:lls}
Any coin-flipping protocol in which each player broadcasts at most one bit can be secure against at most $t=O(\sqrt n)$ corruptions.
Moreover, the majority protocol achieves this bound.
\end{theorem}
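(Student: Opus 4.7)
The plan is to use a martingale argument to exhibit an adaptive corruption strategy that biases the outcome by a constant while corrupting only $O(\sqrt n)$ players. View the single-bit protocol as revealing the bits $r_1,\dots,r_n$ one at a time --- in round order if the protocol is one-round, or via a depth-first traversal of the protocol tree in the multi-round case. Let $X_k = \Pr[\outcome=1 \mid r_1,\dots,r_k]$ be the resulting Doob martingale, and write $\Delta_k = X_k^{(1)} - X_k^{(0)}$ for the `conditional influence' of player $k$'s bit given the revealed prefix. Then $X_k - X_{k-1} = \pm\Delta_k/2$, so by orthogonality of martingale differences $\EE\bigl[\sum_k \Delta_k^2\bigr] = 4\,\mathrm{Var}(X_n) = \Omega(1)$ for any $\epsilon$-secure protocol, and $|\Delta_k| \le 1$ pointwise.

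The adversary, aiming to bias toward $1$, processes players in order and corrupts each encountered player with $\Delta_k \ge \tau := c/\sqrt n$, stopping when the budget $t = C\sqrt n$ is exhausted; a corruption sets the bit to $\arg\max_b X_k^{(b)}$. Each corruption yields a \emph{deterministic} martingale gain of $\Delta_k/2$, while an uncorrupted player contributes a mean-zero step of magnitude $\le \tau$ (except for the at most $t$ `large' players left honest after the budget is spent). A short case analysis using $\EE[\sum \Delta_k^2] = \Omega(1)$ and $\Delta_k \le 1$ gives deterministic gain $\Omega(1)$: either at least $t$ large players arise, so the budget harvests $\ge t\tau/2 = \Omega(1)$; or fewer, and the adversary harvests all of them with $\sum_k \Delta_k\,\mathbf{1}[\Delta_k\ge\tau] \ge \sum_k \Delta_k^2\,\mathbf{1}[\Delta_k\ge\tau] = \Omega(1)$, using $\Delta_k \le 1$.

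The main obstacle I expect is controlling the honest residual, i.e.\ showing that the random fluctuations from uncorrupted players do not cancel the deterministic gain. For the players whose steps are bounded by $\tau$, Azuma--Hoeffding gives a total fluctuation of $O(\tau\sqrt n) = O(1)$, which with constant probability does not wipe out the $\Omega(1)$ deterministic gain. The occasional large honest step that appears after the budget is exhausted can be handled by a separate $L^2$ bound (only $O(\sqrt n)$ such steps exist, and their second-moment contribution is $O(1)$). Since the adversary's corrupt/honest decisions are measurable in the revealed history, the residual increments remain a valid martingale-difference sequence, so the concentration estimates apply unconditionally.

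Finally, the `moreover' clause --- that the majority protocol tolerates $\Omega(\sqrt n)$ corruptions --- is the standard anti-concentration observation: flipping a single bit in the majority vote of $n$ unbiased bits changes the outcome with probability $\Theta(1/\sqrt n)$, so any adversary using $o(\sqrt n)$ corruptions shifts the outcome probability by only $o(1)$ from the balanced $1/2$, yielding constant-bias security.
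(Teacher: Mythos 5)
The paper does not contain a proof of this theorem: it is quoted from Lichtenstein, Linial and Saks \cite{LLS89} and used as a black box, so there is nothing in the paper against which to compare your argument. What follows is an assessment of the blind reproof on its own terms.

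The overall route --- Doob martingale, corrupt high-conditional-influence players, control the honest residual by Azuma --- is natural and essentially the right one. But two steps have real gaps. First, the identity $\EE[\sum_k\Delta_k^2]=4\,\mathrm{Var}(X_n)=\Omega(1)$ is a statement about the \emph{honest} martingale, while your case analysis uses the resulting lower bound for the $\Delta_k$'s encountered along the \emph{corrupted} run, which is a different stochastic process once the adversary starts intervening; on top of that you treat $\sum_k\Delta_k^2\mathbf{1}[\Delta_k\geq\tau]$ as a pointwise $\Omega(1)$ quantity, when at most you get an $\Omega(1)$ expectation for the honest process. Second, an ``$\Omega(1)$ deterministic gain'' is not, by itself, a refutation of security: under the paper's definitions, showing that no protocol is secure against $\omega(\sqrt n)$ corruptions means exhibiting, for every constant $\eps>0$, an adversary pushing one outcome probability above $1-\eps$. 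You need the achievable bias to tend to $1$ as the budget parameter $C$ in $t=C\sqrt n$ grows, not merely to exceed some fixed constant.

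Both problems are repairable, and the fix actually dispenses with the $\EE[\sum\Delta_k^2]$ step. Work directly in the corrupted process and use the terminal constraint $X_n\in\{0,1\}$. Let $T$ be the step at which the budget $t=C\sqrt n$ is exhausted (or $T=n$ otherwise). Up to time $T$ every uncorrupted step has conditional influence $<\tau=c/\sqrt n$ by design, so writing $X_T=X_0+G+M_T$ with $G=\sum_{\text{corrupt }k\leq T}\Delta_k/2$, the martingale part $M_T$ of the stopped process has increments bounded by $\tau/2$ and Azuma gives $\Pr[|M_T|\geq\eps/2]\leq 2e^{-\Omega(\eps^2/c^2)}$. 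On the complementary event, budget exhaustion is impossible: it would force $G\geq t\tau/2=Cc/2$, while $X_T\leq 1$ forces $G\leq 1-X_0+\eps/2<1$, a contradiction once $Cc/2>1$. Hence $T=n$, and $X_n=X_0+G+M_n\geq\eps-\eps/2>0$, so $X_n=1$. Choosing $c\approx 2/C$, the failure probability is $2e^{-\Omega(\eps^2C^2)}\to 0$ as $C\to\infty$, which is exactly the quantitative strength required. Finally, the ``moreover'' clause is fine in substance, but against an adaptive adversary the clean statement is that $t$ corruptions shift the honest $\mathrm{Bin}(n-t,\tfrac12)$ count by at most $t$, and that count lands within $t$ of the majority threshold with probability $O(t/\sqrt n)$; the phrasing about ``flipping a single bit changes the outcome with probability $\Theta(1/\sqrt n)$'' reads as a union bound, which is the wrong shape for an adaptive adversary.
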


Next, we establish some definitions and supporting lemmas.

\newcommand{\dist}{{\sf dist}}
\begin{definition}[Distance between message-vectors]
For vectors $\vec{r},\vec{r'}\in\cM^n$, let $\dist(\vec{r},\vec{r'})$ be equal to the number of coordinates $i\in[n]$
for which $r_i\neq r'_i$.
\end{definition}

\newcommand{\Robust}{{\sf Robust}}
\begin{definition}[Robust sets]
Let $\Pi$ be a one-round coin-flipping protocol in which each player sends a message from a message space $\cM$.
For any $n\in\NN$ and $b\in\{0,1\}$, define the set $\Robust^{\Pi_n}(b,t)$ as follows:
$$\Robust^{\Pi_n}(b,t)=\left\{\vec{r}\in\cM^n ~:~ \forall\vec{r'}\in\cM^n \mbox{ s.t. } \dist(\vec{r},\vec{r'})\leq t, ~\Pi_n(\vec{r})=\Pi_n(\vec{r'})=b\right\}.$$
\end{definition}

\begin{lemma}\label{lem:robustSets}
Let $\Pi$ be a one-round coin-flipping protocol in which each player sends a random message from a message space $\cM$.
$\Pi$ is secure against $t=t(n)$ strong adaptive corruptions if and only if
there exists a constant $0<\eps<1$ such that for all $n\in\NN$ and each $b\in\{0,1\}$,
$$\Pr_{\vec{r}\larr\cM}\left[\vec{r}\in\Robust^{\Pi_n}(b,t)\right]\geq\eps.$$
\end{lemma}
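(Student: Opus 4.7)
My plan is to prove the lemma by directly unpacking the definitions of a strong adaptive adversary and of the robust sets $\Robust^{\Pi_n}(b,t)$. The crucial first observation is that in a one-round protocol, a strong adaptive adversary is especially simple: on any draw $\vec{r}$ from the honest distribution, $\Adv$ sees all of $\vec{r}$ before acting, and may replace at most $t$ coordinates with arbitrary values from $\cM$, so the set of transcripts $\Adv$ can force is \emph{exactly} the Hamming ball $\{\vec{r}'\in\cM^n : \dist(\vec{r},\vec{r}')\leq t\}$. This tight characterization is what allows both directions of the lemma to emerge as matching inequalities.

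For the ``if'' direction, I would use this observation to argue that whenever $\vec{r}\in\Robust^{\Pi_n}(b,t)$, every transcript in the Hamming ball around $\vec{r}$ — and hence every transcript the adversary could possibly force — evaluates to $b$ under $\Pi_n$. Thus for \emph{any} strong adaptive $\Adv$ corrupting at most $t$ players and either $b\in\{0,1\}$,
\[ {\Pr}^{\Pi_n,\Adv}(b) \;\geq\; \Pr_{\vec{r}\larr\cM^n}\bigl[\vec{r}\in\Robust^{\Pi_n}(b,t)\bigr] \;\geq\; \eps, \]
which is exactly $\eps$-security with the same constant $\eps$.

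For the ``only if'' direction, I would construct, for each $b\in\{0,1\}$, a specific strong adaptive adversary $\Adv_b$ that tries to force outcome $1-b$: on seeing $\vec{r}$, $\Adv_b$ checks whether some $\vec{r}'$ in the radius-$t$ Hamming ball around $\vec{r}$ satisfies $\Pi_n(\vec{r}')=1-b$, and if so, corrupts the (at most $t$) coordinates where $\vec{r}$ and $\vec{r}'$ differ, substituting the values from $\vec{r}'$; otherwise $\Adv_b$ does nothing. A short case analysis — using the trivial fact that $\vec{r}$ itself lies in its own ball, so $\Pi_n(\vec{r})\neq b$ already witnesses $\vec{r}\notin\Robust^{\Pi_n}(b,t)$ — shows that the complement of $\Robust^{\Pi_n}(b,t)$ is precisely the set of $\vec{r}$ on which $\Adv_b$ succeeds. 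Hence ${\Pr}^{\Pi_n,\Adv_b}(1-b) = 1 - \Pr[\vec{r}\in\Robust^{\Pi_n}(b,t)]$, and $\eps$-security of $\Pi$ against $\Adv_b$ yields $\Pr[\vec{r}\in\Robust^{\Pi_n}(b,t)]\geq\eps$ for each $b$.

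I do not expect any substantial obstacle here, since the argument is essentially a dictionary translation between the ``$\forall\,\vec{r}'$ close to $\vec{r}$'' quantifier in the definition of $\Robust$ and the ``$\forall\,\Adv$'' quantifier in the definition of $\eps$-security. The one subtlety I would be careful about is stating the reachable-transcript characterization tightly as an equality (every Hamming-ball neighbour is achievable, and nothing else is), so that the same constant $\eps$ carries through in both directions rather than degrading into a two-sided bound with different constants.
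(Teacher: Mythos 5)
Your proposal is correct and matches the paper's argument in substance: both directions hinge on the same observation that a strong adaptive adversary's reachable transcripts form exactly the radius-$t$ Hamming ball around the honest draw $\vec{r}$, which translates the quantifier in the definition of $\Robust^{\Pi_n}(b,t)$ directly into the security condition. The only cosmetic difference is that you present the ``only if'' direction directly (assuming security and deriving the bound with the same $\eps$), whereas the paper argues the contrapositive; your version is marginally cleaner but not a different approach.
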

\begin{proof}
{\sc (``if'')}
Suppose that there exists a constant $0<\eps<1$ such that for all $n\in\NN$ and all $b\in\{0,1\}$, it holds that
\begin{align}\label{eqn:probGoodConstant}
\Pr_{\vec{r}\larr\cM^n}\left[\vec{r}\in\Robust^{\Pi_n}(b,t)\right]\geq\eps.
\end{align}

Let $\Adv$ be any strong adaptive adversary making up to $t$ corruptions.
For $n$-vector of (honest) messages $\vec{r}\in\cM^n$, let $\Adv(\vec{r})\in\cM^n$ denote the 
corresponding corrupted message-vector, where up to $t$ of the messages have been modified by $\Adv$.
By the definition of the set $\Robust^{\Pi_n}(b,t)$, 
it holds that
\begin{align}\label{eqn:advLoses}
\Pr_{\vec{r}\larr\cM^n}\left[\Pi_n(\Adv(\vec{r}))=b ~|~ \vec{r}\in\Robust^{\Pi_n}(b,t)\right]=1.
\end{align}

Combining equations \eqref{eqn:probGoodConstant} and \eqref{eqn:advLoses}, it follows that for each outcome $b\in\{0,1\}$,
$$
\Pr_{\vec{r}\larr\cM^n}\left[\Pi_n(\Adv\left(\vec{r})\right)=b\right]\geq\eps.
$$

We have shown that for each $b\in\{0,1\}$, $\Pr^{\Pi,\Adv}(b)\geq\eps$, as required. 

{\sc (``only if'')}
Suppose, on the other hand, that there is no constant $0<\eps<1$ such that for all $b\in\{0,1\}$, it holds that
$\Pr_{\vec{r}\larr\cM^n}\left[\vec{r}\in\Robust^{\Pi_n}(b,t)\right]=\eps$.
That is, there exists some $\eps'=o(1)$ such that for some $b\in\{0,1\}$ and infinitely many values of $n\in\NN$, it holds that
\begin{align}\label{eqn:fewGoods}
\Pr_{\vec{r}\larr\cM^n}\left[\vec{r}\in\Robust^{\Pi_n}(b,t)\right]\leq \eps'.
\end{align}
Without loss of generality, let $b=0$ be the bit for which equation \eqref{eqn:fewGoods} holds.
By the definition of $\Robust^{\Pi_n}(b,t)$, it holds that for any $\vec{r}\notin\Robust^{\Pi_n}(b,t)$,
there exists a vector $\vec{r}_{bad}\in\cM^n$ such that $\dist(\vec{r},\vec{r}_{bad})\leq t$ and $\Pi_n(\vec{r})\neq\Pi_n(\vec{r}_{bad})$.
In other words, if the honest players' messages $\vec{r}$ do not fall in $\Robust^{\Pi_n}(0,t)$, then
it is possible for a strong adaptive adversary $\Adv$ to \emph{force} the outcome to be 1, by doing as follows:
$$\Adv(\vec{r}) = \begin{cases}
\vec{r} & \mbox{ if } \Pi_n(\vec{r})=1 \\
\vec{r}_{bad} & \mbox{ if } \Pi_n(\vec{r})=0
\end{cases}$$
Note that since $\dist(\vec{r},\vec{r}_{bad})\leq t$, it is always possible for the adversary to change from $\vec{r}$ to $\vec{r}_{bad}$
using $t$ or fewer corruptions. Moreover, if $\Pi_n(\vec{r})=0$, then it must be that $\Pi_n(\vec{r}_{bad})=1$, by construction of $\vec{r}_{bad}$.
Hence, 
\begin{align}\label{eqn:advWins}
\Pr_{\vec{r}\larr\cM^n}\left[\Pi_n(\Adv(\vec{r}))=1 ~|~ \vec{r}\notin\Robust^{\Pi_n}(0,t)\right]=1.
\end{align}
Combining equations \eqref{eqn:fewGoods} and \eqref{eqn:advWins} (for $b=0$), we obtain:
$$
\Pr_{\vec{r}\larr\cM^n}\left[\Pi_n(\Adv(\vec{r}))=1\right] = 
\Pr_{\vec{r}\larr\cM^n}\left[\vec{r}\notin\Robust^{\Pi_n}(0,t)\right] \geq 1-\eps'.
$$
Hence, $\Pr^{\Pi,\Adv}(1)\geq 1-\eps'$, and so $\Pr^{\Pi,\Adv}(0)\leq\eps'=o(1)$.
Therefore, $\Pi$ is not secure against $t$ strong adaptive corruptions.
The lemma follows.
\end{proof}

Since players are computationally unbounded and we consider one-round protocols,
we may without loss of generality consider public-coin protocols\footnote{This is without loss of generality: 
each player can simply send his random coin tosses, and security holds since we are in the full-information model.}: 
for any one-round protocol $\Pi$ in the full-information model,
there is a protocol $\Pi'$ with an identical output distribution (in the presence of any adversary),
in which honest players send random messages in $\{0,1\}^k$ for some $k=\poly(n)$.

The following lemma serves as a stepping-stone to our final theorem. 

\begin{lemma}\label{lem:manyBitsMatrix}
For any one-round multi-bit coin-flipping protocol $\Pi$ secure against $t=t(n)$ strong adaptive corruptions, and any constant $\delta>0$,
there is a one-round $\ell$-bit coin-flipping protocol $\Pi'$ that is secure against $t$ strong adaptive corruptions,
where $\ell=O(\log^{1+\delta}(n))$.
\end{lemma}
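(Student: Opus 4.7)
By Lemma~\ref{lem:robustSets}, the hypothesis that $\Pi$ is secure against $t$ strong adaptive corruptions gives a constant $\eps>0$ with $\mu_n(b):=\Pr_{\vec{r}\larr\cM^n}[\vec{r}\in\Robust^{\Pi_n}(b,t)]\geq\eps$ for both $b\in\{0,1\}$ and all~$n$. My plan is to build $\Pi'$ by the probabilistic method, via a randomly chosen ``lookup table''. Fix $\ell=\lceil\log^{1+\delta}(n)\rceil$ and $N=2^\ell$. Draw a table $T=\{m_{i,j}\}_{i\in[n],\,j\in[N]}$ with each entry independent and uniform on~$\cM$, and define the one-round $\ell$-bit protocol $\Pi'_n$ so that player~$i$ broadcasts a uniformly random $j_i\in[N]$ and the outcome equals $\Pi_n(m_{1,j_1},\dots,m_{n,j_n})$, with $T$ hard-wired into the (non-uniform) specification of~$\Pi'_n$. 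The task reduces to showing that for every $n$ some choice of $T$ makes $\Pi'_n$ strongly adaptively secure against $t$ corruptions, at which point Lemma~\ref{lem:robustSets} applied with message space~$[N]$ finishes the proof.

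\textbf{Key step and concentration.} If $\vec j,\vec j'\in[N]^n$ satisfy $\dist(\vec j,\vec j')\leq t$, the induced tuples $\vec m_{\vec j}:=(m_{1,j_1},\dots,m_{n,j_n})$ and $\vec m_{\vec j'}$ agree in every coordinate where $j_i=j'_i$, so $\dist(\vec m_{\vec j},\vec m_{\vec j'})\leq t$ as well; hence $\vec m_{\vec j}\in\Robust^{\Pi_n}(b,t)$ implies $\vec j\in\Robust^{\Pi'_n}(b,t)$. Since each individual $\vec m_{\vec j}$ is uniform on $\cM^n$ under the random choice of $T$, this gives $\EE_T[\,|\Robust^{\Pi'_n}(b,t)|\,]\geq\mu_n(b)\cdot N^n\geq\eps N^n$ for each $b$. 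To upgrade this expectation to high probability I would apply McDiarmid's bounded-differences inequality to the $nN$ independent atomic entries of $T$: resampling one entry $m_{i,j}$ can only flip the indicator $\mathbf{1}[\vec m_{\vec k}\in\Robust^{\Pi_n}(b,t)]$ for the $N^{n-1}$ grid points $\vec k\in[N]^n$ with $k_i=j$, giving per-entry sensitivity at most $N^{n-1}$. McDiarmid then yields $\Pr_T[\,|\Robust^{\Pi'_n}(b,t)|<(\eps/2)N^n\,]\leq\exp\!\bigl(-\Omega(\eps^2 N/n)\bigr)$, which is $o(1)$ once $N=\omega(n)$ --- comfortably satisfied by $N=2^{\lceil\log^{1+\delta}(n)\rceil}$. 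A union bound over $b\in\{0,1\}$ furnishes a table $T$ with $|\Robust^{\Pi'_n}(b,t)|\geq(\eps/2)N^n$ for both bits simultaneously, and the reverse direction of Lemma~\ref{lem:robustSets} then certifies that $\Pi'$ is $(\eps/2)$-secure against $t$ strong adaptive corruptions.

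\textbf{Main obstacle.} The delicate piece is the concentration step, because the indicators $\{\mathbf{1}[\vec m_{\vec j}\in\Robust^{\Pi_n}(b,t)]\}_{\vec j\in[N]^n}$ are heavily dependent: two grid points sharing $n-k$ coordinates share $n-k$ table entries. A direct second-moment / Paley--Zygmund calculation can only bound the pairwise covariance by $\mu_n(b)$ at every positive Hamming distance --- and without structural assumptions on $\Robust^{\Pi_n}(b,t)$ that bound really is tight --- which yields only constant success probability for a single $b$ and cannot survive a union bound over $b\in\{0,1\}$. McDiarmid sidesteps the dependencies entirely by exposing the table one independent entry at a time and exploits the clean per-entry sensitivity $N^{n-1}$; the one place I would slow down in a full write-up is verifying carefully that changing $m_{i,j}$ leaves every indicator $\mathbf{1}[\vec m_{\vec k}\in\Robust^{\Pi_n}(b,t)]$ with $k_i\neq j$ untouched, which is exactly the feature that drives the bound $\ell=O(\log^{1+\delta}(n))$.
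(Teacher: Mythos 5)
Your construction is exactly the paper's (a hard-wired random lookup table, with each player broadcasting a row index), and you correctly identify the key implication $\vec m_{\vec j}\in\Robust^{\Pi_n}(b,t)\Rightarrow \vec j\in\Robust^{\Pi'_n}(b,t)$, which is the paper's equation \eqref{eqn:robustSetsSame}. Where you diverge is the concentration argument showing that a random table works. The paper takes a somewhat roundabout route: it draws $n$ random index-vectors $\vec a_1,\dots,\vec a_n$, argues that since $\ell'$ is super-polynomial all $n^2$ indices are distinct with overwhelming probability (so the $\vec M(\vec a_i)$ become i.i.d.\ uniform), invokes statistical indistinguishability with fresh long messages, and chains three Chernoff bounds to relate the empirical robust-fraction to $\alpha_b=\Pr_{\vec a}[\vec M(\vec a)\in\Robust^{\Pi_n}(b,t)]$. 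You instead apply McDiarmid directly to $f(T)=|\{\vec j:\vec m_{\vec j}\in\Robust^{\Pi_n}(b,t)\}|$ as a function of the $nN$ independent table entries, getting per-entry sensitivity $N^{n-1}$ and hence failure probability $\exp(-\Omega(\eps^2 N/n))$. This is a genuinely cleaner and more direct argument, and your flagged ``delicate piece'' is exactly the right spot: by concentrating $f(T)$ rather than $|\Robust^{\Pi'_n}(b,t)|$ itself (whose entry-sensitivity is \emph{not} $N^{n-1}$, since a single table change can affect the robustness status of every grid point via its Hamming ball), you sidestep the dependency problem that would otherwise sink McDiarmid. Both approaches yield the same quantitative requirement $N=\omega(\poly(n))$, comfortably met by $N=2^{\Theta(\log^{1+\delta}n)}$, and both hand off to Lemma~\ref{lem:robustSets} to conclude security. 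Your proof is correct and, if anything, tighter in exposition than the published one.
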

\begin{proof}
Without loss of generality, we consider only public-coin protocols,
and assume that each player sends a message of the same length (say, $k=k(n)$ bits).
Let $\delta>0$ be any constant, let $\ell=O(\log^{1+\delta}(n))$, and let $\ell'=2^{\ell}$.

For an $\ell'\times n$ matrix of messages $M\in(\{0,1\}^k)^{\ell'\times n}$, we define the protocol $\Pi^M$ as follows:
each player $P_i$ broadcasts a random integer $a_i\larr[\ell']$, and the protocol outcome is defined by
$$\Pi^M_n(a_1,\dots,a_n) = \Pi_n(M_{(a_1,1)},\dots,M_{(a_n,n)}),$$
where $M_{(i,j)}$ denotes the message at the $i^{th}$ row and $j^{th}$ column of the matrix $M$.
For notational convenience, define $\vec{M}(a_1,\dots,a_n)=(M_{(a_1,1)},\dots,M_{(a_n,n)})$.
Notice that by construction of the protocol $\Pi^M$, it holds that for any message-vector $\vec{a}\in[\ell']^n$,
\begin{align}\label{eqn:robustSetsSame}
\vec{M}(\vec{a})\in\Robust^{\Pi_n}(b,t) ~\Longrightarrow~ \vec{a}\in\Robust^{\Pi^M_n}(b,t).
\end{align}

Suppose each entry of the matrix $M$ is a uniformly random message in $\{0,1\}^k$.
Note that the length of each player's message in $\Pi^M$ is $\log(\ell')=\ell$.
We want to show that $\Pi^M$ is a secure coin-flipping protocol against $t$ strong adaptive corruptions, for some $M$.
By Lemma~\ref{lem:robustSets}, it is sufficient to show that there exists $M\in(\{0,1\}^k)^{\ell'\times n}$ such that for all $b\in\{0,1\}$,
\begin{align}\label{eqn:notReallyRandom1}
\Pr_{\vec{a}\larr[\ell']^n}\left[\vec{a}\in\Robust^{\Pi^M_n}(b,t)\right]\geq\eps,
\end{align}
where $0<\eps<1$ is constant.
Using implication \eqref{eqn:robustSetsSame}, it actually suffices to prove:
\begin{align}\label{eqn:notReallyRandom2}
\exists M\in(\{0,1\}^k)^{\ell'\times n} \mbox{ s.t. } \forall b\in\{0,1\},~~
\Pr_{\vec{a}\larr[\ell']^n}\left[\vec{M}(\vec{a})\in\Robust^{\Pi_n}(b,t)\right]\geq\eps.
\end{align}

Suppose the matrix $M$ is chosen uniformly at random. Let $\vec{a}_1,\dots\vec{a}_n$ be sampled independently and uniformly from $[\ell']^n$.
Since, the number of matrix rows $\ell'=2^{O(\log^{1+\delta}(n))}$ is super-polynomial,
it is overwhelmingly likely that $\vec{a}_1,\dots\vec{a}_n$ will be composed of distinct elements in $[\ell']$. 
That is, to be precise, 
$$\Pr_{\vec{a}_1,\dots,\vec{a}_n}\left[\forall (i,j)\neq(i',j')\in[n]\times[n], ~ (\vec{a}_i)_j \neq (\vec{a}_{i'})_{j'}\right]\geq 1-\negl(n).$$
If $\vec{a}_1,\dots,\vec{a}_n$ are indeed composed of distinct elements, the message-vectors
$\vec{M}(\vec{a}_1),\dots,\vec{M}(\vec{a}_n)$ are independent random elements in $(\{0,1\}^k)^n$.
Thus,
\begin{align}\label{eqn:statIndist}
(\vec{M}(\vec{a}_1),\dots,\vec{M}(\vec{a}_n))\statIndist(\vec{r}_1,\dots,\vec{r}_n),
\end{align}
when $M$ is a random matrix in $(\{0,1\}^k)^{\ell'\times n}$, 
the (short) message-vectors $\vec{a}_1,\dots,\vec{a}_n$ are random in $[\ell']^n$, 
and the (long) message-vectors $\vec{r}_1,\dots,\vec{r}_n$ are random in $(\{0,1\}^k)^n$.

Since $\Pi$ is a secure coin-flipping protocol, there is a constant $0<\eps'<1$ such that for all $n\in\NN$ and $b\in\{0,1\}$ and $i\in[n]$,
$$\Pr_{\vec{r}_i}\left[\vec{r}_i\in\Robust^{\Pi_n}(b,t)\right]\geq\eps'.$$
The rest of the proof follows from a series of Chernoff bounds.

For $i\in[n]$ and $b\in\{0,1\}$, let $Z_{i,b}$ be an indicator variable for the event that $\vec{r}_i\in\Robust^{\Pi_n}(b,t)$.
Since the $\vec{r}_i$ are independent, we apply a Chernoff bound to obtain the following (for all $b\in\{0,1\}$):
\begin{align}\label{eqn:chernoff}
\Pr_{\vec{r}_1,\dots,\vec{r}_n}\left[\frac{1}{n}\cdot\sum_{i\in[n]}Z_{i,b}<\eps'- \eps''\right]\leq \negl(n),
\end{align}
for any constant $0<\eps''<\eps'$. 

Let $Y_{i,b}$ be an indicator variable for the event that $\vec{M}(\vec{a}_i)\in\Robust^{\Pi_n}(b,t)$.
It follows from \eqref{eqn:statIndist} and \eqref{eqn:chernoff} that with overwhelming probability over the choice of the random matrix $M$, 
it holds for all $b\in\{0,1\}$ that 
\begin{align}\label{eqn:chernoffSet}
\Pr_{\vec{a}_1,\dots,\vec{a}_n}\left[\frac{1}{n}\cdot\sum_{i\in[n]}Y_{i,b}<\eps'- \eps''\right]\leq \negl(n).
\end{align}

For $b\in\{0,1\}$, let $\alpha_b$ denote the probability $\Pr_{\vec{a}_i}\left[\vec{M}(\vec{a}_i)\in\Robust^{\Pi_n}(b,t)\right]$.
Note that for any given $b\in\{0,1\}$ the variables $Y_{i,b}$ are independently and identically distributed, each taking value 1 with probability $\alpha_b$
and value 0 with probability $1-\alpha_b$. 
By a Chernoff bound, for any constant $0<\eps'''<1$, it holds that (with overwhelming probability over the choice of $M$):
\begin{align}\label{eqn:realChernoff}
\Pr_{\vec{a}_1,\dots,\vec{a}_n}\left[\left|\frac{1}{n}\cdot\sum_{i\in[n]}Y_{i,b}-\alpha_b\right|\geq\eps'''\right]\leq\negl(n).
\end{align}
From \eqref{eqn:chernoffSet} and \eqref{eqn:realChernoff}, it follows that with overwhelming probability over the random choice of $M$, 
for all $b\in\{0,1\}$ and any constant $0<\eps''<1$ and $0<\eps'''<1$,
$$
\Pr_{\vec{a}_1,\dots,\vec{a}_n}\left[\alpha_b<\eps'-\eps''-\eps'''\right]\leq\negl(n).
$$
By taking $\eps''+\eps'''\leq\eps'/2$, we have that with overwhelming probability over $M$, it holds that $\alpha_b<\eps'/2$ for all $b\in\{0,1\}$.
Finally, the $\alpha_b$ correspond exactly to the probability expression in \eqref{eqn:notReallyRandom2}, so
we have shown statement \eqref{eqn:notReallyRandom2} as required.
\end{proof}

Having reduced the length of players' messages to $\polylog(n)$ in Lemma \ref{lem:manyBitsMatrix},
we now prove the following lemma which reduces the required communication even further, so that each player sends only one bit.
This comes at the cost of a polylogarithmic factor reduction in the number of corruptions.

Before the lemma, we recall the statement of the Chernoff bound.

\begin{theorem}[Chernoff bound]
Let $X_1,\dots,X_n$ be independent random variables taking values in $\{0,1\}$, which all have the same expectation $\mu=\EE[X_i]$. Then, for every $0<\eps<1$,
$$\Pr\left[\left|\frac{1}{n}\cdot\sum_{i\in[n]} X_i-\mu\right|\geq\eps\right]\leq 2e^{-2n\eps^2}.$$
\end{theorem}

\begin{lemma}\label{lem:multiToOneBit_strong}
For any one-round $\ell$-bit coin-flipping protocol $\Pi$ secure against $t=t(n)$ strong adaptive corruptions, 
there is a one-round single-bit coin-flipping protocol $\Pi'$ that is secure against $t/\ell$ strong adaptive corruptions.
\end{lemma}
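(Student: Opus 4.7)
The plan is to build $\Pi'$ from $\Pi$ by \emph{splitting} each player of $\Pi$ into $\ell$ virtual players, each of whom broadcasts one bit of the original message. I would first (without loss of generality) assume $\Pi$ is public-coin, so that an honest player $i$ broadcasts $\ell$ uniformly random bits $b_{i,1},\ldots,b_{i,\ell}$ constituting $r_i$. Then I would define $\Pi'$ to have $n\ell$ players, indexed by pairs $(i,j)\in[n]\times[\ell]$, where virtual player $(i,j)$ broadcasts a uniformly random bit $b_{i,j}$, and the protocol outcome is defined as $\Pi_n(r_1,\ldots,r_n)$ with $r_i \defeq b_{i,1}\cdots b_{i,\ell}$. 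By construction $\Pi'$ is a one-round single-bit protocol.

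Next I would establish security by a direct simulation. Given any strong adaptive adversary $\Adv'$ against $\Pi'$ that corrupts at most $t/\ell$ virtual players, I would construct a strong adaptive adversary $\Adv$ against $\Pi$ as follows: upon seeing the honest messages $r_1,\ldots,r_n$, $\Adv$ presents their constituent bits to $\Adv'$ (this is exactly the view of the honest broadcasts that $\Adv'$ would have in a strongly adaptive execution of $\Pi'$); once $\Adv'$ specifies its set $C'\subseteq[n]\times[\ell]$ of corrupted virtual players together with their replacement bits, $\Adv$ corrupts the ``projected'' set $C = \{\, i : \exists j,\, (i,j)\in C'\,\}$ of full players, and for each $i\in C$ replaces $r_i$ by the string obtained from it by overwriting, for every $(i,j)\in C'$, the $j$th bit with $\Adv'$'s chosen value. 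Since $|C|\le|C'|\le t/\ell\le t$, $\Adv$ is a valid strong adaptive adversary against $\Pi$, and by construction the output distributions of the two experiments are identical.

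Finally, by hypothesis $\Pi$ is $\eps$-secure against $t$ strong adaptive corruptions for some constant $\eps > 0$, so $\min(\Pr^{\Pi,\Adv}(0),\Pr^{\Pi,\Adv}(1))\ge\eps$; equality of the induced output distributions then gives $\min(\Pr^{\Pi',\Adv'}(0),\Pr^{\Pi',\Adv'}(1))\ge\eps$ for every $\Adv'$ in the allowed class, which is precisely the required security of $\Pi'$.

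The one point I would be careful about is verifying that $\Adv$ can truly reproduce any bit-level modification that $\Adv'$ performs: this is immediate because a corruption in $\Pi$ grants $\Adv$ full control over an entire $\ell$-bit message and therefore subsumes any per-bit overwrite that $\Adv'$ could make on the corresponding virtual players. No Chernoff-style concentration argument is needed for this step; the reduction is purely combinatorial, and in fact it establishes the stronger bound that $\Pi'$ tolerates up to $t$ strong adaptive corruptions, of which the claimed $t/\ell$ is an immediate weakening. I expect the only ``hard'' conceptual content to lie in being precise about the equivalence of the two adversarial views in the strong adaptive model, so that the simulation is manifestly faithful and no information or power leaks across the reduction.
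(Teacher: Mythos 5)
Your proof is correct and takes essentially the same approach as the paper: the same grouping/splitting construction of $\Pi'$, and the same core combinatorial fact that corrupting $k$ virtual single-bit players can touch at most $k$ of the original $\ell$-bit messages. The paper phrases this via its robust-set characterization (Lemma~\ref{lem:robustSets}) rather than a direct adversary simulation, but the content is identical, and your observation that the reduction actually preserves the full budget $t$ (with $t/\ell$ merely a weakening) is already implicit in the paper's inequality~\eqref{eqn:robustSingleBit}, which has $t(n)$ on both sides.
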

\begin{proof}
Let $\Pi$ be any one-round $\ell$-bit coin-flipping protocol secure against $t=t(n)$ strong adaptive corruptions.
We define our new single-bit protocol\footnote{We remark that the protocol $\Pi'$ that we construct does not strictly adhere to
Definition~\ref{def:coinFlipping}, because $\Pi'=\{\Pi_n\}_{n\in\ell\cdot\NN}$ does not define an $n$-player protocol for every $n\in\NN$.
We consider this to be a very minor technical detail that we bury for clarity of exposition.
} $\Pi'$ as follows, for each $n\in\NN$:
\begin{gather*}
\Pi'_{n\cdot\ell}(r_1,\dots,r_{n\cdot\ell})= \\
\Pi_n\left((r_1||\dots||r_\ell), (r_{\ell+1}||\dots||r_{2\ell}), \dots, (r_{(n-1)\cdot\ell+1}||\dots||r_{n\cdot\ell})\right),
\end{gather*}
where the messages $r_i\in\{0,1\}$ are bits and $||$ denotes concatenation.
Informally speaking, there are $n$ groups of $\ell$ players in the single-bit protocol $\Pi'_{n\cdot\ell}$, each of which
``corresponds to'' a single player in the protocol $\Pi_n$.

We show that $\Pi'$ is secure against $t/\ell$ corruptions.
Let $G_i$ denote the $i^{th}$ group of $\ell$ players: to be precise, $G_i=\{i\cdot\ell+1,\dots,(i+1)\cdot\ell\}$.
If all of the players in the set $G_i$ are honest, then the $i^{th}$ ``combined message'' $(r_{i\cdot\ell+1}||\dots||r_{(i+1)\cdot\ell})$
is distributed identically to an honest message of the $i^{th}$ player in the protocol $\Pi_n$.
By the construction of the protocol $\Pi'$, it follows that for any $b\in\{0,1\}$ and $n\in\NN$,
\begin{align}\label{eqn:robustSingleBit}
\Pr_{\vec{r}\larr\{0,1\}^{n\cdot\ell}}\left[\vec{r}\in\Robust^{\Pi'_{n\cdot\ell}}(b,t(n))\right]\geq
\Pr_{\vec{r'}\larr(\{0,1\}^\ell)^n}\left[\vec{r'}\in\Robust^{\Pi_n}(b,t(n))\right].
\end{align}
By Lemma \ref{lem:robustSets}, since $\Pi$ is secure against $t$ strong adaptive corruptions,
there is a constant $0<\eps<1$ such that for all $b\in\{0,1\}$ and $n\in\NN$, the right-hand side of inequality \eqref{eqn:robustSingleBit}
is at least $\eps$. Hence we obtain
$$
\Pr_{\vec{r}\larr\{0,1\}^{n\cdot\ell}}\left[\vec{r}\in\Robust^{\Pi'_{n\cdot\ell}}(b,t(n))\right]\geq\eps.
$$
It follows (by applying Lemma \ref{lem:robustSets} again) that $\Pi'$ is secure against $t/\ell$ strong adaptive corruptions.
\end{proof}

Finally, we bring together Lemmas \ref{lem:manyBitsMatrix} and \ref{lem:multiToOneBit_strong} to prove the theorem.

\begin{customthm}{\ref{thm:main}}
Any one-round coin-flipping protocol $\Pi$ can be secure against at most $t=\widetilde{O}(\sqrt n)$ strong adaptive corruptions.
\end{customthm}
\begin{proof}
Suppose, for contradiction, that there exists a one-round coin-flipping protocol $\Pi$ which is secure against $t$ corruptions,
where $t=\omega(\sqrt{n}\cdot\polylog(n))$.
Then, by Lemma \ref{lem:manyBitsMatrix}, there is an $\ell$-bit one-round coin-flipping protocol $\Pi'$
that is secure against $t$ strong adaptive corruptions, where $\ell=\polylog(n)$. 
By applying Lemma \ref{lem:multiToOneBit_strong} to the protocol $\Pi'$,
we deduce that there is a single-bit one-round coin-flipping protocol $\Pi''$ which is secure against $t/\ell=\widetilde{\Omega}(t)$ strong adaptive corruptions.
Since a \emph{strongly adaptive} adversary can perfectly simulate
any strategy of an \emph{adaptive} adversary, it follows that $\Pi''$ is secure against $\widetilde{\Omega}(t)$ adaptive corruptions.
Since $\Pi''$ is single-bit, this contradicts Theorem \ref{thm:lls}.
\end{proof}

\subsection{Proof of Theorem \ref{thm:minmax}}

In this section, we show that for any symmetric one-round coin-flipping protocol secure against $t$ \emph{adaptive} corruptions,
there is a one-round coin-flipping protocol secure against $\Omega(t)$ corruptions by \emph{strong adaptive} adversaries.
That is, one-round strong adaptively secure protocols are a more general class than one-round symmetric, adaptively secure protocols.

\begin{remark}
In fact, Theorem \ref{thm:minmax} holds even if the protocol $\Pi$ is just statically secure: the proof does not make use of the fact
that $\Pi$ is adaptively, rather than statically, secure. 
Our theorem statement refers to $\Pi$ as an adaptively secure protocol because this is exactly what we need in order to obtain our final result
that any one-round symmetric coin-flipping protocol can be secure against at most $O(\sqrt{n})$ corruptions.
\end{remark}

The Minimax Theorem -- a classic tool in game theory -- will be an important tool in our proof. 
The statement of the Minimax Theorem and supporting game-theoretic definitions are given below.

\begin{definition}[Two-player strategic game]
A \emph{two-player finite strategic game} $\Gamma=\langle (A_1,A_2),(u_1,u_2)\rangle$ is defined by:
for each player $i\in\{1,2\}$,
a non-empty set of possible \emph{actions} $A_i$ and
a \emph{utility function} $u_i:A_1\times A_2\rarr\RR$.
\end{definition}

\begin{definition}[Zero-sum game]
A two-player finite strategic game $\Gamma=\langle (A_1,A_2),(u_1,u_2)\rangle$ is \emph{zero-sum} if
for any pair of actions $a_1\in A_1$ and $a_2\in A_2$, it holds that 
$u_1(a_1,a_2)+u_2(a_1,a_2)=0$.
\end{definition}

\begin{theorem}[Minimax \cite{vN44,Nash}]\label{thm:originalMinMax}
Let $\Gamma=\langle (A_1,A_2),(u_1,u_2)\rangle$ be a zero-sum two-player finite strategic game.
Then 
$$\max_{a_2\in \Delta(A_2)} \min_{a_1\in \Delta(A_1)} u_2(a_1,a_2) = \min_{a_1\in \Delta(A_1)} \max_{a_2\in \Delta(A_2)} u_1(a_1,a_2),$$
where $\Delta(A_i)$ denotes the set of distributions over $A_i$ (in game-theoretic terminology, this corresponds to the set of ``mixed strategies'' for player $i$.)
\end{theorem}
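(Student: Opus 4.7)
The plan is to prove the Minimax Theorem by reducing it to a standard bilinear form and then applying the separating hyperplane theorem (equivalently, invoking LP duality for the resulting small linear program). First I would set up notation: since $\Gamma$ is zero-sum, $u_1 \equiv -u_2$, and a mixed strategy for player $i$ is a distribution $p_i \in \Delta(A_i)$. Let $M$ be the $|A_1|\times|A_2|$ matrix with entries $M_{ij}=u_2(a_1^i,a_2^j)$. Expected utilities under mixed strategies $p\in\Delta(A_1)$ and $q\in\Delta(A_2)$ are bilinear expressions $p^\top M q$. Substituting $u_1=-u_2$ on the right-hand side reduces the claim to the classical identity $\max_{q}\min_{p} p^\top M q = \min_{p}\max_{q} p^\top M q$.

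Next I would dispatch the easy ``weak duality'' direction: for any fixed $p,q$ one has $\min_{p'}(p')^\top M q \le p^\top M q \le \max_{q'} p^\top M q'$, so taking $\max_q$ on the left and $\min_p$ on the right preserves the inequality $\le$. This gives the bound $\max_{q}\min_{p} p^\top M q \le \min_{p}\max_{q} p^\top M q$ immediately, using no structural fact other than pointwise comparison.

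For the hard direction, let $v^* = \max_{q}\min_{p} p^\top M q$, and note that by bilinearity the inner minimum over the simplex is attained at a pure strategy, so $v^* = \max_{q}\min_{i}(Mq)_i$. I would then consider the set $K = \{Mq : q\in\Delta(A_2)\}\subseteq\RR^{|A_1|}$, which is nonempty, convex, and compact as the image of a simplex under a linear map. The definition of $v^*$ forces $K$ to be disjoint from the open upper cone $U = v^*\mathbf{1}+\RR^{|A_1|}_{>0}$, since any $Mq\in U$ would contradict the maximality of $v^*$. Apply the separating hyperplane theorem to obtain a nonzero vector $p\in\RR^{|A_1|}$ and a scalar $c$ with $p^\top x\le c$ on $K$ and $p^\top x\ge c$ on $\overline{U}$. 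Because $\overline{U}$ is unbounded along each positive coordinate direction, the separator must satisfy $p\ge 0$; after rescaling, $p\in\Delta(A_1)$, and evaluating at the corner $v^*\mathbf{1}$ of $\overline{U}$ pins down $c=v^*$. Hence $p^\top M q \le v^*$ for every $q\in\Delta(A_2)$, which yields $\min_{p'}\max_{q} (p')^\top M q \le \max_{q} p^\top M q \le v^*$, closing the gap.

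The main obstacle is the careful application of the separating hyperplane step, particularly ensuring that the normal vector has nonnegative entries so that it corresponds to a genuine mixed strategy and not merely a signed measure. This is exactly where the geometry of the upper cone $\overline{U}$ is used: negativity of any coordinate of $p$ would send $p^\top x\to -\infty$ along a ray in $\overline{U}$, contradicting the separating inequality $p^\top x\ge c$. The remaining arguments, namely passing from bilinear extrema to vertex extrema and verifying that $K$ and $U$ are indeed disjoint, are routine convex-analytic bookkeeping once the setup is in place.
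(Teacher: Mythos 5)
The paper does not prove this statement at all; it is quoted as a classical result of von Neumann and Nash, so there is no in-paper argument to compare your route against. Taken on its own terms, your convexity proof of the classical identity $\max_{q\in\Delta(A_2)}\min_{p\in\Delta(A_1)} p^\top M q=\min_{p\in\Delta(A_1)}\max_{q\in\Delta(A_2)} p^\top M q$ (with $M_{ij}=u_2(a_1^i,a_2^j)$) is correct and complete: weak duality is immediate; $K=M\Delta(A_2)$ is convex and compact and disjoint from the open cone $U=v^*\mathbf{1}+\RR^{|A_1|}_{>0}$; and the separation step correctly forces the normal $p$ to be nonnegative (hence, after normalization, a mixed strategy) with $p^\top Mq\le c\le v^*$ for all $q$. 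One nit: evaluating at the corner gives only $c\le v^*$, not $c=v^*$, but the inequality is all you use.

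The one genuine problem is your opening reduction. You assert that substituting $u_1=-u_2$ into the right-hand side of the displayed equation yields the classical identity, but it does not: $\min_{p}\max_{q}u_1(p,q)=\min_p\bigl(-\min_q u_2(p,q)\bigr)=-\max_p\min_q u_2(p,q)$, which in general equals neither $\min_p\max_q u_2(p,q)$ nor the left-hand side. For instance, with one row action, two column actions, and $u_2$ given by the matrix $(1\ \ 2)$, the left-hand side is $2$ while $\min_p\max_q u_1=-1$. So the equation as printed is false, and what you prove is not literally it. The statement in the paper evidently contains a typo --- the right-hand side should read $u_2$ (equivalently, one needs a sign flip together with exchanging which player is minimized over), and the $u_2$-form is also what the paper actually invokes later in the proof of Theorem \ref{thm:minmax}. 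You have proved the correct, intended statement, but you should say so explicitly and drop the claim that the substitution $u_1=-u_2$ effects the reduction, since as written that step is wrong.
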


\begin{customthm}{\ref{thm:minmax}}
For any symmetric one-round coin-flipping protocol $\Pi$ secure against $t=t(n)$ adaptive corruptions,
there is a symmetric one-round coin-flipping protocol $\Pi'$ secure against $s=t/2$ strong adaptive corruptions.
\end{customthm}
\begin{proof}
Let $\Pi$ be a symmetric one-round coin-flipping protocol secure against $t=t(n)$ adaptive corruptions,
and define $s(n)=t(n)/2$.
We define a new protocol $\Pi'=\{\Pi'_n\}_{n\in\NN}$ as follows:
$$\Pi'_n(r_1,\dots,r_n)=\min_{\rprimesblue}\max_{\rprimesred}\Pi_{n+2s}\left(\rs{1}{n},\rprimesblue,\rprimesred\right),$$
where $s=s(n)$ and honest players in $\Pi'_n$ must send messages according to the same distributions as in $\Pi_{n+2s}$.

Observe that $\Pi_{n+2s}$ is secure against $t(n+2s(n))>t(n)$ corruptions. 
We show that $\Pi'_n$ is secure against $s(n)=t(n)/2$ strong adaptive corruptions.

{\sc Case 1.} Suppose that the adversary aims to bias the outcome towards $0$. 
By the security of $\Pi_{n+2s}$, there is a constant $0<\eps<1$ such that
$\Pr^{\Pi_{n+2s},\Adv}(1)\geq\eps$ for any adaptive adversary $\Adv$ that corrupts up to $t=2s$ players. 
Without loss of generality (since the protocol is symmetric), suppose that the adversary corrupts the last $2s$ players in $\Pi_{n+2s}$.

We say that the honest players' messages $\rs{1}{n}$ ``fix'' the outcome of
$\Pi_{n+2s}$ to be $1$ if for any possibly malicious messages $\rhats{1}{2s}$,
it holds that $\Pi_{n+2s}(\rs{1}{n},\rhats{1}{2s})=1$.
Then, with probability at least $\eps$, the honest players' messages $\rs{1}{n}$ ``fix'' the outcome of $\Pi_{n+2s}$ to be $1$.
(To see this: suppose not. Then there would exist an adversary which could set the corrupt messages $\rhats{1}{2s}$ 
so that the protocol outcome is $0$ with probability $1-\eps$.
But this cannot be, since we already established that $\Pr^{\Pi_{n+2s},\Adv}(1)\geq\eps$.)

Define the set
$R_1\defeq\left\{(\rs{1}{n}):\forall\rhats{1}{2s},~~\Pi_{n+2s}(\rs{1}{n},\rhats{1}{2s})=1\right\}$
to consist of those honest message-vectors that fix the output of $\Pi_{n+2s}$ to be $1$.

Take any $(\rs{1}{n})\in R_1$. 
We now show that the outcome of $\Pi'_n$ when the honest players send messages $\rs{1}{n}$ is equal to $1$,
even in the presence of a strong adaptive adversary $\Adv'$ that corrupts up to $s$ players and aims to bias the outcome towards $0$.
Without loss of generality, suppose that $\Adv'$ corrupts the first $s$ players in $\Pi'_n$, and replaces their honest messages $\rs{1}{s}$
with some maliciously chosen messages $\rhats{1}{s}$. 
In this case, the outcome of $\Pi'_n$ is
\begin{align*}
\Pi'_n&(\hat{r}_1,\dots,\hat{r}_s,r_{s+1},\dots,r_n) \\
& =\min_{\rprimesblue}\max_{\rprimesred}\Pi_{n+2s}\left(\rhats{1}{s},\rs{s+1}{n},\rprimesblue,\rprimesred\right) \\
& \geq\min_{\rprimesblue}\Pi_{n+2s}\left(\rhats{1}{s},\rs{s+1}{n},\rprimesblue,\rs{1}{s}\right) \\
& =\min_{\rprimesblue}\Pi_{n+2s}\left(\rs{1}{n},\rhats{1}{s},\rprimesblue\right)\tag{by symmetry} \\
& =1,
\end{align*}
where the last line follows from the definition of $R_1$, since we started with $(\rs{1}{n})\in R_1$.

We already established that the probability that the honest players' messages fall in $R_1$ is at least $\eps$.
Thus we deduce that with probability at least $\eps$, the outcome of the new protocol $\Pi'_n$ is equal to $1$, 
even in the presence of a strong adaptive adversary corrupting $s$ players and aiming to bias towards $0$.

{\sc Case 2.} Suppose instead that the adversary $\Adv'$ aims to bias the outcome towards $1$.
We apply the Minimax Theorem to a zero-sum game where player 1 chooses the messages $\rprimesblue$ and player 2 chooses the messages $\rprimesred$,
and player 1 ``wins'' if the protocol outcome is 0, and player 2 wins otherwise. 
By the Minimax Theorem,
$$\Pi'_n(r_1,\dots,r_n)=\max_{\rprimesred}\min_{\rprimesblue}\Pi_{n+2s}\left(\rs{1}{n},\rprimesblue,\rprimesred\right).$$

Given this new and equivalent definition of $\Pi'_n$, we can apply exactly 
the same argument structure as that given for Case 1 above, to deduce that
\begin{itemize}
\item There is a constant $0<\eps'<1$ such that $\Pr^{\Pi_{n+2s},\Adv}(0)=1-\Pr^{\Pi_{n+2s},\Adv}(1)=\eps'$ for any adaptive $\Adv$ performing up to $2s$ corruptions, 
and hence there is a non-empty set $$R_0\defeq\left\{(\rs{1}{n}):\forall\rhats{1}{2s},~~\Pi_{n+2s}(\rs{1}{n},\rhats{1}{2s})=0\right\}\mbox{,~~and}$$
\item by the adaptive security of $\Pi_{n+2s}$, the messages of honest players will fall in $R_0$ with probability at least $\eps'$, and
\item if the honest players' messages fall in $R_0$, then the outcome of $\Pi'_n$ is equal to $0$, 
even in the presence of a strong adaptive adversary corrupting $s$ players and aiming to bias towards $1$.
\end{itemize}

We have established that both outcomes 0 and 1 occur with constant probability in $\Pi'_n$, even in the presence of 
an arbitrary strong adaptive adversary corrupting up to $s$ players. 
Therefore, $\Pi'_n$ is secure against $s=t/2$ corruptions.
\end{proof}

\section{Conclusion}

We have introduced a new adversarial model for multi-party protocols and an associated security notion, \emph{strong adaptive security}.
We have made use of a novel and widely applicable technique for reducing the amount of communication in a protocol,
to show that any one-round strongly adaptively secure coin-flipping protocol can tolerate at most $\widetilde{O}(\sqrt n)$ corruptions.
We believe that this work paves the way to a number of little-explored research directions.
We highlight some interesting questions for future work:

\begin{itemize}
\item To study the extent to which \emph{communication can be reduced in protocols in general}, and to extend our communication-reduction techniques to
the settings of multi-round protocols and/or adaptive security.
\item To apply the \emph{strong adaptive security notion} in the context of other types of protocols and settings, 
and to design protocols secure in the presence of strong adaptive adversaries.
\item To consider whether adaptively secure \emph{asymmetric} coin-flipping protocols can be converted to adaptively secure \emph{symmetric} protocols, in general. This is not known even for the one-round case, and the question is moreover of interest since there are known one-round protocols which are not symmetric.
\item To extend this work to prove (or disprove) the long-open conjecture of Lichtenstein et al. \cite{LLS89} that \emph{any} adaptively secure coin-flipping protocol
can tolerate at most $O(\sqrt n)$ corruptions. 
\end{itemize}

\printbibliography

\end{document}